\newcommand{\beq}{\begin{equation}\ }
\newcommand{\eeq}{\end{equation}\ }
\newcounter{algorithm}
\begin{document}
\newtheorem{lemma}{Lemma}[section]
\newtheorem{define}{Definition}[section]
\newtheorem{cor}{Corollary}[section]
\newtheorem{claim}{Claim}
\newtheorem{theorem}{Theorem}[section]
\newtheorem{example}{Example}[section]

\title{A simple and numerical stable algorithm for solving the cone projection problem based on a Gram-Schmidt process}

\author{Demetris T. Christopoulos}
\affiliation{Department of Economics, National and Kapodistrian University of Athens}
\keywords{cone projection algorithm, convexity constraints, convexity test simulations, Gram-Schmidt.}
\email[Demetris T. Christopoulos, ]{dchristop@econ.uoa.gr}
\date{\today}

\begin{abstract}\begin{center}Abstract\end{center}
We are presenting a simple and numerical stable algorithm for the solution of the cone projection problem which is suitable for relative small data sets and for simulation purposes needed for convexity tests. Not even one pseudo-inverse matrix is computed because of a proper Gram-Schmidt orthonormalization process that is used. 
\end{abstract}

\maketitle

\section{The Cone Projection problem\label{sec:geom}}

\hspace{0.5cm} We have the data set $(x_i,\phi_i),i=1,2,\ldots,n$ which has emerged from a convex function f at least $C^{(2)}[x_1,x_n]$ by the process:
\beq
\phi_{i}=f(x_{i})+\epsilon_{i},\,\,\epsilon\sim iid(0,\sigma^2\,I_n)
\eeq
We want to find the vector y that has the smallest euclidean distance from $\phi$ subject to the requirement of convexity $A\,y\geq{0}$, thus we have to solve the next primal optimization problem:\\
\beq
\begin{matrix}
\text{min}\,\left\{\sum_{i=1}^{n}\,\left(y_{i}-\phi_{i}\right)^{2}=\left(y-\phi\right)^{T}\,\left(y-\phi\right)\right\}\\
\\
\text{subject to:}\,\left(-A\,y\right)\leq{0} \\
\end{matrix}
\label{eq:primal1}
\eeq
There are two equivalent versions for the matrix A of the convexity inequalities constraints. The first one is is to observe that we have strict inequalities:
$$
x_{1}<x_{2}<\cdots<x_{n}
$$
so starting from the definition of convexity we proceed to the inequalities:\\
\beq
\label{eq:dxdx}
\begin{matrix}
{\frac {y_{{i+2}}-y_{{i+1}}}{x_{{i+2}}-x_{{i+1}}}}\geq {\frac {y_{{i+1}}-y_{{i}}}{x_{{i+1}}-x_{{i}}}} \\
\\
\left( y_{{i+2}}-y_{{i+1}} \right)  \left( x_{{i+1}}-x_{{i}} \right) \geq \left( y_{{i+1}}-y_{{i}} \right)  \left( x_{{i+2}}-x_{{i+1}} \right) \\
\\
\left( x_{{i+2}}-x_{{i+1}} \right) y_{{i}}+ \left( x_{{i}}-x_{{i+2}} \right) y_{{i+1}}+ \left( x_{{i+1}}-x_{{i}} \right) y_{{i+2}}\geq{0}\\
\end{matrix}
\eeq
By constructing now all the above inequalities for $i=1,2,\ldots,n-2$ we have formulated the matrix $A^{(i)}$.\\
\begin{widetext}
\beq
\label{eq:ai}
A^{(i)}=\begin{pmatrix}
x_{3}-x_{2}&x_{{1}}-x_{{3}}&x_{2}-x_{1}&0&\cdots&0 \\
0&x_{{4}}-x_{{3}}&x_{{2}}-x_{{4}}&x_{{3}}-x_{{2}}&0&0\\ 
\ddots&\ddots&\ddots&\ddots&\ddots&\ddots \\
0&0&0& x_{{n}}-x_{{n-1}} & x_{{n-2}}-x_{{n}} &x_{{n-1}}-x_{{n-2}} \\
\end {pmatrix}
\eeq 
\end{widetext}

The second way is obtained if we have equal spaced $x_i$-data. Then it is easy to eliminate the same positive quantity $\Delta{x}=x_{j+1}-x_{j}$ from all inequalities:
\beq
\label{eq:dxeq}
\begin{matrix}
\left( x_{{i+2}}-x_{{i+1}} \right) y_{{i}}+ \left( x_{{i}}-x_{{i+2}} \right) y_{{i+1}}+ \left( x_{{i+1}}-x_{{i}} \right) y_{{i+2}}\geq{0} \\
\\
\left( \Delta{x}\right)y_{{i}}-\left( 2\Delta{x}\right) y_{{i+1}}+\left( \Delta{x}\right)y_{{i+2}}\geq{0} \\
\\
y_{{i}}-2 y_{{i+1}}+ y_{{i+2}}\geq{0} \\
\end{matrix}
\eeq
again with $i=1,2,\ldots,n-2$ and create the matrix $A^{(ii)}$.\\
\beq
\label{eq:aii}
A^{(ii)}=\begin{pmatrix}
1&-2&1&0&\cdots&0 \\
0&1&-2&1&\cdots&0\\
\ddots&\ddots&\ddots&\ddots&\ddots&\ddots \\
0&0&\cdots&1&-2&1
\end {pmatrix}
\eeq 
 
\begin{lemma}\label{lem:lag}
The polar component $\rho^{*}$ of the vector decomposition $\phi=y^{*}+\rho^{*}$ with $y^{*}$ the solution of problem \ref{eq:primal1} is a linear combination of the negative rows of matrix A, while the coefficients are either zero (if the corresponding constraint is not binding - inactive) or positive (if the relevant constraint is binding - active).
\end{lemma}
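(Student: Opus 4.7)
The plan is to recognize the primal problem \eqref{eq:primal1} as a convex quadratic program with linear inequality constraints, for which the KKT conditions are both necessary and sufficient, and then read off the claim directly from stationarity plus complementary slackness. Strong duality requires only feasibility here (not Slater), and feasibility is immediate since, for instance, any strictly convex sequence $y$ satisfies $Ay>0$.

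Concretely, I would form the Lagrangian
\begin{equation}
L(y,\lambda)=(y-\phi)^{T}(y-\phi)+\lambda^{T}(-Ay),\qquad \lambda\ge 0,
\end{equation}
and compute the stationarity condition $\nabla_{y}L=2(y^{*}-\phi)-A^{T}\lambda^{*}=0$. Rearranging yields
\begin{equation}
\rho^{*}=\phi-y^{*}=-\tfrac{1}{2}A^{T}\lambda^{*}=\sum_{i=1}^{n-2}\Bigl(\tfrac{\lambda_{i}^{*}}{2}\Bigr)\bigl(-A_{i,:}^{T}\bigr),
\end{equation}
which already displays $\rho^{*}$ as a non-negative linear combination of the negated rows of $A$, since dual feasibility gives $\lambda_{i}^{*}\ge 0$. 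Then I would invoke complementary slackness $\lambda_{i}^{*}(-Ay^{*})_{i}=0$: whenever the $i$-th convexity inequality is strict at $y^{*}$ (inactive constraint), the corresponding multiplier $\lambda_{i}^{*}$ must vanish; whenever it holds with equality (active/binding), the multiplier is allowed to be positive. This matches the zero/positive dichotomy in the statement.

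The only step requiring a moment of care is justifying that the KKT system actually characterises the optimum, so I would remark that the objective is strictly convex (Hessian $2I_{n}\succ 0$) and the feasible set $\{y:Ay\ge 0\}$ is a non-empty closed polyhedral cone, hence a unique minimiser $y^{*}$ exists and KKT is equivalent to optimality for this linearly constrained convex QP. The hardest aspect is really just bookkeeping of the sign convention — making sure the $-Ay\le 0$ formulation used in \eqref{eq:primal1} produces the \emph{negative} rows of $A$ in the combination, as the lemma phrases it — rather than any genuine analytic difficulty.
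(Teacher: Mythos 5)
Your proposal is correct and follows essentially the same route as the paper: form the Lagrangian of problem \ref{eq:primal1}, use stationarity to write $\rho^{*}=\phi-y^{*}=-\tfrac{1}{2}A^{T}\lambda^{*}$ as a nonnegative combination of the negated rows of $A$, and invoke dual feasibility together with Karush--Kuhn--Tucker complementary slackness for the zero/positive dichotomy. Your added remark that strict convexity and polyhedral (hence Slater-free) constraints make the KKT conditions necessary and sufficient is a small justification the paper leaves implicit, but it is not a different argument.
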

\begin{proof}
The Lagrangian function of the problem and the first order condition for y can be written as:\\
\beq
L\left(y,\lambda\right)=\left(y-\phi\right)^{T}\,\left(y-\phi\right)+\lambda^{T}\,\left(-A\,y\right)
\label{eq:primal1L}
\eeq
\beq
\begin{matrix}
\frac{\partial{L\left(y,\lambda\right)}}{\partial{y}}=2\,\left(y-\phi\right)+\left(-A^{T}\right)\,\lambda=0 \\
\text{or}\\
2\,\left(y-\phi\right)+\left(-A^{T}\right)\,\lambda=0 \\
\text{or}\\
y=\phi+\frac{1}{2}\,A^{T}\,\lambda
\label{eq:primal1foc}
\end{matrix}
\eeq
So, for the optimal solution $\{y^{*},\lambda^{*}\}$ we have that it also holds:
\beq
\begin{matrix}
y^{*}=\phi+\frac{1}{2}\,A^{T}\,\lambda^{*} \\
\text{or}\\
\phi-y^{*}=-\frac{1}{2}\,A^{T}\,\lambda^{*} \\
\text{or}\\
\rho^{*}=\left(-A^{T}\right)\,\frac{\lambda^{*}}{2} \\
\text{or}\\
\rho^{*}=\left(-A^{T}\right)\,\hat{\lambda}^{*}
\end{matrix}
\label{eq:primal1opt}
\eeq
Thus the representation of the polar component of the data vector, following the definitions of  \cite{roc-70} and  \cite{mey-06} in the basis of the negative rows of A is half the Lagrange coefficient vector of the optimization problem \ref{eq:primal1}. The coefficients are zero or positive if the corresponding constraint is inactive or active respectively, due to Karush Kuhn Tucker complementarity slackness conditions.
\end{proof}

\section{A numerical stable Geometric Algorithm for Cone Projection}
\subsection{An illustrative example}
\begin{example}
Let' s start with a common convex function:
\beq
f(x)=x^{2}\, , x\in \left[0,1\right]
\eeq
\par \emph{\textit{Let the vectors $x=(0,\frac{1}{2},1,\frac{3}{2},2)$ and $\phi=(0,\frac{1}{2},\frac{5}{2},\frac{15}{4},4)$ as presented at Figure \ref{fig:conv5}}} where we have drawn also the chord connecting $(x_1,\phi_1)$ and $(x_5,\phi_5)$. If our data was convex then all $(x_i,\phi_i)$ should lie above the chord, so clearly we have not convexity here.\\
\begin{figure}[hbtp]
\begin{center}
\includegraphics[scale=0.4]{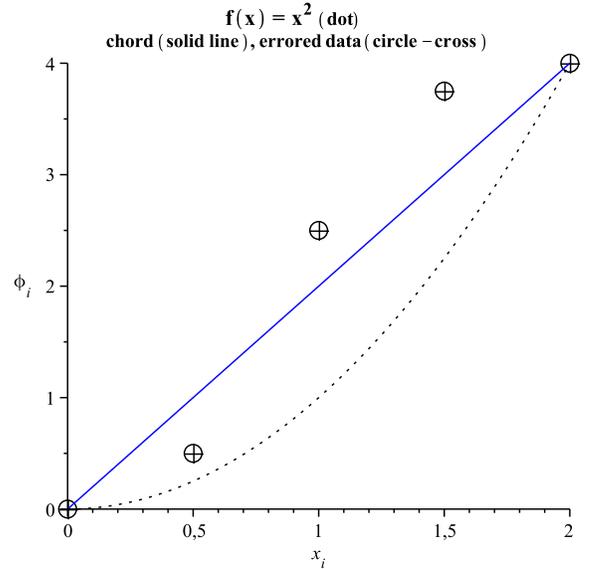}
\caption{The statement of the convex projection problem for $n=5$}\label{fig:conv5}
\end{center}
\end{figure}	 
The demand for convexity takes the form of the next inequality constraints, using matrix $A^{(ii)}$ because of the equal spaced $x_i$:
\beq
\begin{matrix}
A\,y\ge{0} \cr
\cr
A=\begin{pmatrix}
1&-2&1&0&0 \cr
0&1&-2&1&0 \cr
0&0&1&-2&1
\end{pmatrix}
\end{matrix}
\eeq
We define the matrix:
$$
R=-A=\begin{pmatrix}
-1&2&-1&0&0 \cr
0&-1&2&-1&0 \cr
0&0&-1&2&-1
\end{pmatrix}
$$
We pre-multiply vector $\phi$ by R:
$$
R\,\phi=\begin{pmatrix}-\frac{3}{2} \cr \frac{3}{4} \cr 1\end{pmatrix}
$$
If all components of the result were negative, then for the matrix A it should hold $A\,\phi\geq{0}$, so our data should be convex. So, if we seek for the greatest deviation from convexity then it is natural to pick the component that is the greatest positive. 	This is compatible with (i) deviation from convexity and with (ii) Lemma \ref{lem:lag}.\\
Here we observe the greatest entrance to be the 3rd one, so we pick up the 3rd row of $R$ as the best chance to obtain a component of the polar vector and proceed by taking the orthogonal projection of $\phi$ onto that row:
$$
\begin{matrix}
R_{1}=r_{3}^{T}=\begin{pmatrix}0 \\ 0 \\ -1 \\ 2 \\ -1 \end{pmatrix} \cr
\mu_{1}=\begin{pmatrix}\frac{\left\langle \phi,r_{3} \right\rangle}{\left\langle r_{3},r_{3} \right\rangle}\end{pmatrix}=\begin{pmatrix}\frac{1}{6}\end{pmatrix}=\begin{pmatrix}0.166666\ldots\end{pmatrix}=\begin{pmatrix}0.1\overline{6}\end{pmatrix}  \cr
\rho_{1}=R_{1}\,\mu_{1}=\begin{pmatrix}0 \\ 0 \\ -\frac{1}{6} \\  \frac{1}{3} \\  -\frac{1}{6} \end{pmatrix} \cr
y_{1}=\phi-\rho_{1}=\begin{pmatrix}0 \cr \frac{1}{2} \cr \frac{8}{3} \cr \frac{41}{12} \cr \frac {25}{6} \end{pmatrix} \cr
R_{1}\,y_{1}=\begin{pmatrix}-\frac{5}{3} \cr \frac {17}{12} \cr 0 \end{pmatrix}
\end{matrix}
$$
Now the greatest entry is the 2nd one, so we pick up the 2nd row of $R$ and continue by taking the matrix with the 2nd and 3rd rows of $R$. It is important to notice that we are always sorting our indices in ascending order.
$$
R_{2}=\begin{pmatrix}r_{2}^{T} & r_{3}^{T}\end{pmatrix}=\begin{pmatrix}0&0 \\ -1&0 \\ 2&-1 \\ -1&2 \\0&-1\end{pmatrix}
$$
Now there exist two ways for projecting our data $\phi$ on the two columns of $R_{2}$:\\
\begin{enumerate}
	\item The traditional way, i.e. the OLS estimator, which involves the pseudo inverse matrix and implies many numerical instabilities
	\item The new proposed way of taking the projection on the orthonormal base produced from them via the Gram-Schmidt procedure.
\end{enumerate}
We choose $2^{nd}$ way and first construct for $R_{2}$ with Gram-Schmidt the matrix with orthonormal columns:
$$
V=\begin{pmatrix}
0&0\\
-\frac{\sqrt {6}}{6}&-\frac{\sqrt {30}}{15}\\
\frac{\sqrt {6}}{3}&\frac{\sqrt {30}}{30}\\ 
-\frac{\sqrt {6}}{6}&\frac{2\,\sqrt {30}}{15}\\ 
0&-\frac{\sqrt {30}}{10}
\end{pmatrix}
$$ 
Then we just take the projections of $\phi$ on the two columns of V:\\
$$
\mu_{2}=V^{T}\,\phi=\begin{pmatrix}\frac{\sqrt {6}}{8} \cr \frac {3\,\sqrt {30}}{20} \end{pmatrix}=
\begin{pmatrix}0.3061862179 \\ 0.8215838362\end{pmatrix}
$$
The reader who is familiar with the active set methodology has to notice that our vector is not identical anymore to the $\lambda$ vector of that method, because of the orthonormalization process. This is the cost for the numerical stabilization of our algorithm.We continue executing our algorithm:
$$
\begin{matrix}
\rho_{2}=V\,\mu_{2}=\begin{pmatrix} 0 \\ -\frac {17}{40} \\ \frac{2}{5} \\ \frac {19}{40} \\ -\frac {9}{20} \end{pmatrix} \cr
y_{2}=\phi-\rho_{2}=\begin{pmatrix}\frac {37}{40} \\ \frac {21}{10} \\ \frac {131}{40} \\ \frac {89}{20} \end{pmatrix} \cr
R_{2}\,y_{2}=\begin{pmatrix}-\frac{1}{4} \\ 0 \\ 0 \end{pmatrix}
\end{matrix}
$$
We observe that there exist no polar edge vector to be inserted in our algorithm, so we exit with the solutions:
$$
\begin{matrix}
\rho^{*}=\rho_{2}=\begin{pmatrix} 0 \\ -\frac {17}{40} \\ \frac{2}{5} \\ \frac {19}{40} \\ -\frac {9}{20} \end{pmatrix}  \cr
y^{*}=y_{2}^{*}=\begin{pmatrix}\frac {37}{40} \\ \frac {21}{10} \\ \frac {131}{40} \\ \frac {89}{20} \end{pmatrix}
\end{matrix}
$$
We check again our solution:
$$
\begin{matrix}
A\,y^{*}=\begin{pmatrix}\frac{1}{4} \cr 0 \cr 0\end{pmatrix} \cr
\cr
\left\langle y^{*},\rho^{*} \right\rangle=0 
\end{matrix}
$$
It is expected to find that we will have two distinct lines for our cone projection plot and the second line has to be the OLS line for the set $\{(x_{i},\phi_{i}),i=2,3,4,5\}$, because only then we have a sequence of vanishing constraints. This fact is easily observed at Figure \ref{fig:gcx5}.
\begin{figure}[hbtp]
\begin{center}
\includegraphics[scale=0.4]{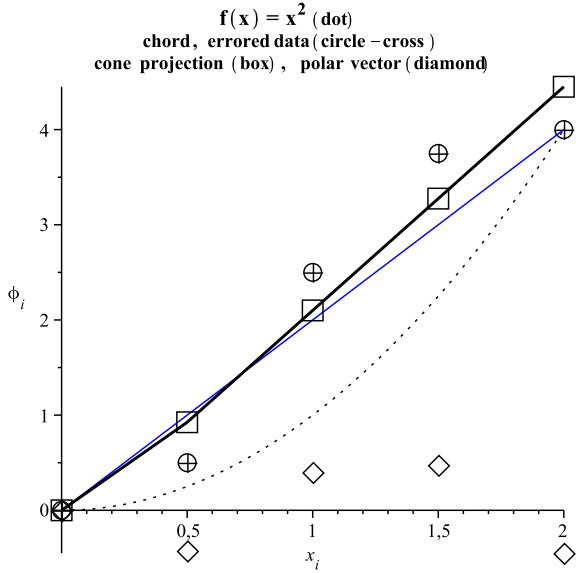}
\caption{The geometry of convex projection problem for $n=5$ in xy-plot}\label{fig:gcx5}
\end{center}
\end{figure}	
\end{example}

\subsection{The algorithm}
\hspace{0.25cm} By increasing the dimension of our problem until a rather big $n$ we continue to apply the same actions, i.e. we have established an \emph{algorithm} for cone projection.\\
\par We start by testing if our data is convex, so there is no need for cone projection at all. If it is not convex, then we multiply it by the $R$ matrix and seek for the maximum component and for its position. That direction is more probable to be an edge of the polar cone, so we find the projection of our data onto the $i^{th}$ row of $R$ matrix. Now we have found the first approximations of the vectors $\rho$ and $y=\phi-\rho$. We multiply again this $y$ with  $R$ ($b=R\,y$) and seek again for the maximum component and for its position. The new direction forms a set together with the previous one and we always sort the indices. The sorted indices construct the X matrix by taking the corresponding rows of $R$ matrix as the columns of X. Then we apply the Gram-Schmidt orthonormalization procedure on the columns of X and construct the matrix V. Now our $\mu$ vector can be calculated and then we find the next $\rho$ and $y$ approximation. We continue our algorithm until we reach at least one of the next three termination criteria:
\begin{enumerate}
	\item The algorithm is terminated if some $b$ vector is `practically' zero
	\item The algorithm is terminated if there is no improvement in the value of $b$
	\item The algorithm is terminated if next index i of $R-$row has already been chosen.
\end{enumerate}
\par Finally we exit from the algorithm with the set of indices J, where we have that the convexity constraints are satisfied as equalities (the active set indices, but without calculating the corresponding Lagrange coefficients), the polar component $\rho^{*}$ and the cone projection component $y^{*}$ of our initial data $\phi$. \emph{We do not compute even one time any kind of pseudo-inverse matrix, which is the fundamental tool of every regression technique.} This is due to the use of Gram-Schmidt orthonormalization process in order to do our orthogonal projections. This makes the algorithm \emph{numerical stable} for using it \emph{for simulation purposes}: we can establish the cone projection solution for every random set of vectors. This cannot be done with the traditional OLS solution, because of the existence of almost singular matrices for floating point arithmetic computations. The pseudo-code of the Algorithm is presented below.\\
\\
\label{alg:geomconv}
\refstepcounter{algorithm}
\begin{minipage}[htbp]{7cm}
\begin{center}
\emph{\textbf{A Gram-Schmidt polar basis\\ Cone Projection Algorithm}}\\ 
Find $y^{*}=\underbrace{argmin}_{y}\left\|y-\phi\right\|_{2}$\\
subject to $A\,y\ge{0}$\\
\end{center}
\textbf{INITIALIZE}\\
 $\left\{\epsilon_1,\epsilon_2, J=\{\},R=-A, b=R\,\phi,b_{old}=b+\theta,\theta>0\right\}$
\begin{itemize}
	\item \textbf{IF} $\left\{b\ge{0}\right\}$ \textbf{THEN} $\left\{\rho=0,y=\phi\right\}$ \textbf{BREAK}
	\item \textbf{ELSE}\\
	$$
	\begin{matrix}
	\textbf{FIRST PROJECTION}\\
	\begin{Bmatrix}
	s=\underbrace{max}_{j=1,\ldots,n-2}{b_{j}}&i=\underbrace{arg}_{j=1,\ldots,n-2}\left(b_{j}=s\right)&J=sort\left(J\cup\{i\}\right)
	\end{Bmatrix} \\
	\rho=project_{r_i}\phi \\
	y=\phi-\rho,b=R\,y\\
	\textbf{NEXT PROJECTIONS}\\
	\begin{Bmatrix}
	s=\underbrace{max}_{j=1,\ldots,n-2}{b_{j}}&i=\underbrace{arg}_{j=1,\ldots,n-2}\left(b_{j}=s\right)&J=sort\left(J\cup\{i\}\right)
	\end{Bmatrix} 
	\end{matrix}
	$$
	\textbf{IF} $\left\{s\le\epsilon_1\right\}$ \textbf{THEN BREAK}\\
  \textbf{DO WHILE} $\left\|b-b_{old}\right\|_{1}\ge\epsilon_2$ 
	$$
	\begin{matrix} 
	\begin{Bmatrix}
	X=\begin{pmatrix}r_{i_{1}}^{T} \ldots r_{i_{k}}^{T}\end{pmatrix}\,,\,i_{j}\in{J}\\
V=GramSchmidt(X)
\end{Bmatrix} \\
\rho=project_{V}\phi \\
y=\phi-\rho,b=R\,y\\
\begin{Bmatrix}
s=\underbrace{max}_{j=1,\ldots,n-2}{b_{j}}&i=\underbrace{arg}_{j=1,\ldots,n-2}\left(b_{j}=s\right)&J=sort\left(J\cup\{i\}\right)
\end{Bmatrix}
\end{matrix}
$$
\begin{itemize}
	\item \textbf{IF} $\left\{s\ge\epsilon_1\right\}$ \textbf{THEN}	$\left\{J=sort\left(J\cup\{i\}\right)\right\}$
	\item \textbf{IF} $\left\{i\in{J}\right\}$ \textbf{THEN BREAK}
	\item \textbf{ELSE BREAK}
\end{itemize}
\textbf{END DO}
\end{itemize}
\textbf{CHECK SOLUTION}\,\,$\left|\left\langle y,\rho\right\rangle\right|\leq{\epsilon_1}$\\
\textbf{RETURN} $\left\{J,\rho,y\right\}$\\
\end{minipage}
\\
We have developed our algorithm in four proper languages:
\begin{enumerate}
	\item First in Maple symbolic algebra system, where with just one page code we are able to execute our algorithm in absolute accuracy by using rational numbers as input data.
	\item Second in R suite, where we have floating point arithmetic, but it is an `alter ego' for the statistician community.
	\item Third in Matlab/Octave, for those who are familiar with the benefits of them. 
	\item Fourth in FORTRAN, one of the fastest ways to execute any numerical algorithm.
\end{enumerate}

\section{Conclusion}
The presented algorithm for solving the cone projection problem is quite simple because:
\begin{itemize}
	\item We don' t take care about the sign of Lagrange multipliers since we don' t compute them
	\item We just include one component of the polar basis every time
\end{itemize}
The algorithm is numerical stable for every kind of initial random vector $\phi$ because all projections are done via a Gram-Schmidt procedure and not with the common OLS pseudo-inverse matrix, which is very often close to singular for floating point arithmetic computations.\\
The algorithm is useful for convexity tests where we need to compute the weights of the weighted $\chi^2$ or Beta distribution that emerges for the corresponding statistical test, see for example \cite{mey-03}.

\end{document}